\DeclareSymbolFont{rsfscript}{OMS}{rsfs}{m}{n}
\DeclareSymbolFontAlphabet{\mathrsfs}{rsfscript}
\begin{document}

\title{Synchronizing automata with random inputs}

\titlerunning{Synchronizing automata with random inputs}

\author{Vladimir V. Gusev}

\authorrunning{V. V. Gusev}

\tocauthor{V. V. Gusev}

\institute{Institute of Mathematics and Computer Science,\\
Ural Federal University, Ekaterinburg, Russia\\
\email{vl.gusev@gmail.com}}

\maketitle

\begin{abstract}
We study the problem of synchronization of automata with random inputs.
We present a series of automata such that the expected number of steps 
until synchronization is exponential in the number of states.
At the same time, we show that the expected number of letters to synchronize any pair 
of the famous \v{C}ern\'{y} automata is at most cubic in the number of states.
\end{abstract}

\section{Introduction}

A \emph{complete deterministic} \emph{finite automaton} $\mathrsfs{A}$, or simply \emph{automaton}, is
a triple $\langle Q,\Sigma,\delta\rangle$, where $Q$ is a finite \emph{set of states},
$\Sigma$ is a finite \emph{input alphabet}, and $\delta: Q\times \Sigma\mapsto Q$ is a totally 
defined \emph{transition function}. Following standard notation, by $\Sigma^*$ we mean the set 
of all finite words over the alphabet $\Sigma$, including the empty word $\varepsilon$. 
The function $\delta$ naturally
extends to the free monoid $\Sigma^{*}$; this extension is still
denoted by $\delta$. Thus, via $\delta$, every word
$w\in\Sigma^*$ acts on the set $Q$. 

An automaton $\mathrsfs{A}$ is called
\emph{synchronizing}, if there is a word $w\in\Sigma^*$ 
which brings all states of the automaton $\mathrsfs{A}$ to a particular one,
 i.e. there exists a state $t \in Q$ such that $\delta(s,w)=t$ for every $s \in Q$.
Any such word $w$ is said to be a \emph{reset} (or \emph{synchronizing}) \emph{word}
for the automaton $\mathrsfs{A}$. The minimum length of reset words for $\mathrsfs{A}$
is called the \emph{reset threshold} of $\mathrsfs{A}$.
Note, that the language $\mathrsfs{L}$ of synchronizing words of the automaton $\mathrsfs{A}$ 
is a \emph{two-sided ideal}, i.e. $\Sigma^*\mathrsfs{L}\Sigma^* = \mathrsfs{L}$.
We say that that the word $w$ synchronizes a pair $\{s,t\}$ if $\delta(s,w)=\delta(t,w)$.


Synchronizing automata serve as transparent and natural models of error-resistant
systems in many applied areas such as robotics, coding theory, and bioinformatics.
At the same time, synchronizing automata surprisingly arise in some parts of pure mathematics:
algebra, symbolic dynamics, and combinatorics on words. See recent surveys
by Sandberg \cite{Sa05} and Volkov \cite{Vo08}
for a general introduction to the theory of synchronizing automata. 

The interest to the field is heated also by the famous \emph{\v{C}ern\'{y} conjecture}.
In 1964 \v{C}ern\'{y} exhibited a series $\mathrsfs{C}_n$ of automata with $n$ states whose reset threshold equals 
$(n-1)^2$ \cite{Ce64}. Soon after he conjectured, that this series represents the worst possible case,
i.e. the reset threshold of every $n$-state synchronizing automaton is at most $(n-1)^2$.
In spite of its simple
formulation and intensive researchers' efforts, the \v{C}ern\'{y}
conjecture remains unresolved for fifty years. 
The best known 
upper bound on the reset threshold of a synchronizing
$n$-state automaton is $\frac{n^3-n}6$ by
Pin~\cite{Pi83}. 

The focus of this paper is on probabilistic aspects of synchronization.
One general question that was actively studied in the literature is the following:
what are synchronizing properties of a \emph{random automaton}?
Skvortsov and Zaks have shown that a random automaton with sufficiently large 
number of letters is synchronizing with high probability~\cite{SZ10}. Later on,
they proved that a random $4$-letter automaton is synchronizing with a positive probability
that is independent of the number of states~\cite{ZS13}. The last step in this direction seems to be done by Berlinkov~\cite{Be13}.
He has shown that a random automaton over a binary alphabet is synchronizing with high probability.
Another direction within this setting is devoted to reset thresholds of random synchronizing automata.
It was shown in~\cite{SZ10} that a random automaton with large number of letters satisfies 
the \v{C}ern\'{y} conjecture with high probability. Furthermore, computational experiments performed in~\cite{ST11,KKS13}
suggest that expected reset threshold of a random synchronizing automaton is sub-linear.

The setting of the present paper is different. 
In our considerations 
we investigate how \emph{random input}
acts on a \emph{fixed} automaton.
Assume that several copies of a synchronizing automaton $\mathrsfs{A}$
simultaneously read a common input from a fixed source 
of random letters. Initially these automata may be in different states. What is the expected number steps $E$
until all copies will be in the same state? 
We can give the following illustration of this approach. Let $\mathrsfs{D}$ be a decoder of a code.
Due to data transmission errors the decoder $\mathrsfs{D}$ may be in a different state compared to
a correct decoder $\mathrsfs{D}_c$. Then the number $E$ computed for decoders $\mathrsfs{D}$ and $\mathrsfs{D}_c$
represents an average number of steps before recovery of the decoder $\mathrsfs{D}$ after an error.

Our setting heavily depends on a model of a random input. In the present paper we restrict ourselves with
a binary alphabet $\Sigma = \{a,b\}$ and the \emph{Bernoulli model}, i.e.
every succeeding letter is drawn independently 
with probability
$p$ for the letter $a$ and probability $q=1-p$ for the letter $b$.
In section~\ref{sec:null} we present a series of $n$-state automata $\mathrsfs{U}_n$ over $\Sigma$
and a pair $S$
such that the expected number of steps to synchronize $S$ is exponential in $n$.
At the same time, in section~\ref{sec:cerny} we show that the expected number of steps to synchronize any pair 
of the famous example $\mathrsfs{C}_n$ by \v{C}ern\'{y} is at most cubic in $n$.
These results reveal that despite the fact that synchronization of $\mathrsfs{C}_n$ is hard in the deterministic 
case, 
it is relatively easy in the random setting.

\section{Automata $\mathrsfs{U}_n$ with the sink state}
\label{sec:null}
Let $\Sigma$ be a binary alphabet $\{a, b\}$.
Let $\mathrsfs{U}_n$ be the minimal automaton recognizing the language $L_n$,
where $L_n$ is equal to
$\Sigma^* a^{\frac{n+1}{2}} b^{\frac{n-1}{2}} \Sigma^*$ if $n$ is odd, and
to $\Sigma^* a^{\frac{n}{2}} b^{\frac{n}{2}} \Sigma^*$ if $n$
is even. Note, that the automaton $\mathrsfs{U}_n$ is synchronizing, and its 
language of synchronizing words coincides with $L_n$.

First, we will consider the case when $n$ is odd.
Let us define $\mathrsfs{U}_n$ more formally, see fig.~\ref{fig:U7}. The set of states of $\mathrsfs{U}_n$ is equal
to $\{1,2, \ldots n + 1\}$. The transition function $\delta$ of $\mathrsfs{U}_n$
is defined as follows:
\[
\delta (i, a) = 
\begin{cases}
i + 1,&\mbox{if } i < \frac{n+3}{2}\\
i,&\mbox{if } i = \frac{n+3}{2}\\
2,&\mbox{if } \frac{n+3}{2} < i < n + 1\\
n+1,&\mbox{if } i = n+1;\\
\end{cases}
\delta (i, b) = 
\begin{cases}
1, & \mbox{if } i < \frac{n+3}{2}\\
i + 1, & \mbox{if } \frac{n+3}{2} \leq i < n + 1\\
n+1, & \mbox{if } i = n+1.\\
\end{cases}
\]

\begin{figure}[ht]
 \begin{center}
  \unitlength=3pt
    \begin{picture}(80,20)(-5,-10)
    \gasset{Nw=6,Nh=6,Nmr=3,loopdiam=6}
    \node(A1)(0,0){$1$}
    \node(A2)(10,0){$2$}
    \node(A3)(20,0){$3$}
    \node(A4)(30,0){$4$}
    \node(A5)(40,0){$5$}
    \node(A6)(50,0){$6$}
    \node(A7)(60,0){$7$}
    \node(A8)(70,0){$8$}
    \drawloop[loopangle=180](A1){$b$}
    \drawloop[loopangle=90](A5){$a$}
    \drawloop[loopangle=0](A8){$a,b$}
    \drawedge(A1,A2){$a$}
    \drawedge(A2,A3){$a$}
    \drawedge(A3,A4){$a$}
    \drawedge(A4,A5){$a$}
    \drawedge(A5,A6){$b$}
    \drawedge(A6,A7){$b$}
    \drawedge(A7,A8){$b$}
    \drawedge[curvedepth=-4,ELside=r,ELpos=30](A2,A1){$b$}
    \drawedge[curvedepth=-8,ELside=r,ELpos=30](A3,A1){$b$}
    \drawedge[curvedepth=-12,ELside=r,ELpos=30](A4,A1){$b$}
    \drawedge[curvedepth=6,ELpos=40](A6,A2){$a$}
    \drawedge[curvedepth=10,ELpos=40](A7,A2){$a$}
    \end{picture}
\end{center}
\caption{Automaton $\mathrsfs{U}_7$}
\label{fig:U7}
\end{figure}

Let $\mathcal{B}(p,q)$ be the source of random letters such that each letter is drawn independently
with probability $p$ for the letter $a$ and probability $q=1-p$ for the letter $b$. Let 
$\mathrsfs{A} = \langle Q,\Sigma,\delta\rangle$ be a synchronizing automaton. We consider the
following random process:\\
\texttt{
1. $S:=Q$\\
2. Until $|S| = 1$ do\\
3. \qquad $x \leftarrow \mathcal{B}(p,q)$\\
4. \qquad $S:=\delta(S,x)$\\
}
We start with the set $S$ equal to the state set $Q$. On each step we draw a random letter $x$ from
the source $\mathcal{B}(p,q)$ and apply it to $S$. We stop when $S$ is a singleton.

In general, we are interested in the average number of steps that this process takes for a given 
automaton $\mathrsfs{A}$. In particular, we have the following theorem.
\begin{theorem}
\label{th:Uodd}
Let $n$ be a positive odd integer. The expected number of letters, that are 
drawn from $\mathcal{B}(p,q)$, until $\mathrsfs{U}_n$ is synchronized,
is equal to $\frac{1}{p^{\frac{n+1}{2}}q^{\frac{n-1}{2}}}$.
\end{theorem}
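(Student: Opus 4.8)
The plan is to translate the question into a classical pattern–waiting–time problem and then evaluate that waiting time. Write $k=\frac{n+1}{2}$, $l=\frac{n-1}{2}$ and $w=a^{k}b^{l}$, so that $L_n=\Sigma^{*}w\Sigma^{*}$ and, as recalled in the text, $L_n$ is exactly the set of reset words of $\mathrsfs{U}_n$. If $x_1x_2\dots$ denotes the stream of letters produced by $\mathcal{B}(p,q)$, then after $t$ steps the process holds the set $\delta(Q,x_1\cdots x_t)$, which is a singleton precisely when $x_1\cdots x_t\in L_n$, i.e. precisely when $w$ occurs as a factor of $x_1\cdots x_t$. Hence the number $T$ of letters consumed by the process is the first occurrence time of the pattern $w$ in the Bernoulli sequence, and the theorem asserts $\mathbb{E}[T]=\frac{1}{p^{k}q^{l}}$. (For $n=1$ we have $l=0$, $w=a$, and $\mathbb{E}[T]=1/p$, which is clear; assume $n\ge 3$ below.) The key structural observation is that $w=a^{k}b^{l}$ with $k,l\ge 1$ is \emph{unbordered}: no non-empty word is simultaneously a proper prefix and a proper suffix of $w$, since a non-empty suffix of $w$ ends in $b$, while every proper prefix of $w$ either lies inside the block $a^{k}$ and ends in $a$, or has the form $a^{k}b^{j}$ with $0\le j<l$ and begins with $a$; comparing first and last letters (and using that every proper suffix of length exceeding $l$ begins with $b$) rules out all cases.

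To compute $\mathbb{E}[T]$ I would run the standard ``gambling team'' martingale argument. Write the letters of $w$ as $w_1w_2\cdots w_m$ with $m=k+l$. Before each letter $x_t$ is revealed a fresh gambler enters, bets $\$1$ on $x_t=w_1$ at fair odds (payoff $1/P(w_1)$, where $P(v)$ is the probability of generating the word $v$), then stakes his whole fortune on $x_{t+1}=w_2$ at payoff $1/P(w_2)$, and so on through $w_1w_2\cdots w_m$; he is wiped out at the first wrong guess and walks away with $1/P(w)$ the moment $w$ has been matched, which is exactly when the process stops. Since one dollar enters the game per step, the total fortune of all gamblers minus the number of elapsed steps is a mean–zero martingale; because $T$ has a geometric tail (the pattern appears in each of the disjoint blocks $x_1\cdots x_m,\,x_{m+1}\cdots x_{2m},\dots$ with probability $p^{k}q^{l}>0$, independently, so $P(T>jm)\le(1-p^{k}q^{l})^{j}$), we get $\mathbb{E}[T]<\infty$ and optional stopping yields $\mathbb{E}[T]=\mathbb{E}[\text{total fortune at time }T]$. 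At time $T$ a gambler still holds money only if the suffix of the sequence since his entry is a prefix of $w$; for an entrant after the decisive occurrence began, this suffix is a non-empty proper suffix of $w$ that is also a prefix of $w$, i.e. a border of $w$ — impossible by the previous paragraph — so the only gambler in the money is the one who entered $m$ steps before the end, holding $1/P(w)=\frac{1}{p^{k}q^{l}}$. Hence $\mathbb{E}[T]=\frac{1}{p^{(n+1)/2}q^{(n-1)/2}}$. (Equivalently, one can do a direct first-step analysis: letting $e_j$ be the expected remaining time when the longest suffix of the word read so far that is a prefix of $w$ has length $j$, unborderedness makes the Markov chain on $\{0,1,\dots,k+l\}$ simple — a $b$ read while inside the $a$-block resets to state $0$, an $a$ read while inside the $b$-block resets to state $1$ — and solving the resulting linear recurrences for $e_0,\dots,e_{k+l}$ again produces $e_0=\frac{1}{p^{k}q^{l}}$.)

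The substantive point is the unborderedness of $w$: this is precisely the feature of the pattern $a^{k}b^{l}$ with both exponents positive that annihilates all the ``correlation corrections'' which would otherwise inflate the waiting time, and it is where the specific definition of $\mathrsfs{U}_n$ is used. The remaining ingredients — the reduction in the first paragraph, the verification that optional stopping applies, or the bookkeeping of the linear system in the alternative approach — are routine. I also expect the even case to go through verbatim, since $w=a^{n/2}b^{n/2}$ is still unbordered, giving $\mathbb{E}[T]=\frac{1}{p^{n/2}q^{n/2}}$; specializing to $p=q=\tfrac12$ then yields the exponential-in-$n$ lower bound promised in the abstract.
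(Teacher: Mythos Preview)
Your argument is correct, but it follows a genuinely different route from the paper's. The paper stays entirely inside the automaton: it observes that the process terminates exactly when the random walk started at state~$1$ reaches the sink $n{+}1$, writes down the linear system $\mu_i = p\,\mu_{\delta(i,a)} + q\,\mu_{\delta(i,b)} + 1$ for the hitting times, and solves it by hand in four stages (first the $a$-block, then the pivot state, then the $b$-block, then the boundary equation), arriving at $\mu_1 = 1/(p^{(n+1)/2}q^{(n-1)/2})$ by explicit algebra. You instead recast the question as the first-occurrence time of the pattern $w=a^{k}b^{l}$ in a Bernoulli stream and invoke the Li/Williams gambling-team martingale; the whole computation then collapses to the single structural fact that $w$ is unbordered, which kills every correlation term and leaves $\mathbb{E}[T]=1/P(w)$. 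Your approach is more conceptual --- it explains \emph{why} the answer has no lower-order terms, and it yields the even case $1/(p^{n/2}q^{n/2})$ with no extra work --- while the paper's approach is more self-contained, requiring nothing beyond elementary linear algebra. Your parenthetical ``first-step analysis'' alternative is essentially the paper's method phrased on the prefix automaton rather than on $\mathrsfs{U}_n$ itself.
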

\begin{proof}
It is rather easy to see that the word $w$ synchronizes the automaton $\mathrsfs{U}_n$
if and only if $\delta(1,w)=n+1$. Thus, the average number of steps in our random process equals
the average length 
of a random walk that brings the state $1$ to the state $n+1$, where the probability 
of the transition labeled by $a$ is $p$, and the probability of the transition labeled by $b$ is $q$.
It is well-known how to compute the latter quantity\footnote{It is also called the 
mean absorption time of a Markov chain}~\cite[section 6.2]{Pr13}.
For $1\leq i \leq n + 1$ let $\mu_i$ be the expected length of a random walk that brings the state $i$
to the state $n + 1$.
These quantities necessarily satisfy the following system of equations:
$$\begin{cases}
\mu_1 = p \mu_2 + q \mu_1 + 1 & \qquad (1)\\
\mu_i = p \mu_{i+1} + q \mu_1 + 1, \mbox{ if } 1 \leq i \leq \frac{n+1}{2} & \qquad (2)\\
\mu_{\frac{n+3}{2}} = p \mu_{\frac{n+3}{2}} + q \mu_{\frac{n+5}{2}} + 1 & \qquad (3)\\
\mu_i = p \mu_2 + q \mu_{i+1} + 1, \mbox{ if } \frac{n+5}{2} \leq i \leq n - 1 & \qquad (4)\\
\mu_n = p \mu_2 + q \mu_{n+1} + 1 & \qquad (5)\\
\mu_{n+1} = 0 & \qquad (6)
\end{cases}$$
We will solve this system in several steps:\\
1. Let us show that $\mu_i = \mu_1 - \frac{p^{i-1} - 1}{p^i - p^{i - 1}}$ for $2 \leq i \leq \frac{n+3}{2}$.
Equation $(1)$ implies that this statement is true for $i=2$. 
Suppose now that the statement is true for $\mu_i$. Let us show that it is true for $\mu_{i+1}$.
From equation $(2)$ we get 
$\mu_1 - \frac{p^{i-1} - 1}{p^i - p^{i - 1}} = p \mu_{i+1} + q \mu_1 + 1$.
Therefore, $\mu_1 - \frac{p^{i} - 1}{p^{i+1} - p^{i}} = \mu_{i+1}$.\\
We will denote $\frac{p^{\frac{n+1}{2}} - 1}{p^{\frac{n+3}{2}} - p^{\frac{n+1}{2}}}$ as $C$ in order to simplify
notation. Therefore, $\mu_{\frac{n+3}{2}} = \mu_1 - C$.\\
2. Equation $(3)$ immediately implies 
$\mu_{\frac{n+5}{2}} = \mu_{\frac{n+3}{2}} - \frac{1}{q}$.
Therefore, we have
$\mu_{\frac{n+5}{2}} = \mu_1 - C - \frac{1}{q}.$\\
3. Now we will show that $\mu_{i} = \mu_1 - \frac{C}{q^{i - \frac{n+5}{2}}} - \frac{1}{q^{i - \frac{n+3}{2}}}$ 
for $\frac{n+5}{2} \leq i \leq n$.
This statement is true for $i = \frac{n+5}{2}$. Let us show that it is true for every succeeding $i \leq n$.
Since $\mu_2 \overset{(1)}{=} \mu_1 - \frac{1}{p}$ we can rewrite equation $(4)$ in the following way:
$\mu_{i} = p \mu_1 + q \mu_{i+1}$. Our assumption states that
$\mu_{i} = \mu_1 - \frac{C}{q^{i - \frac{n+5}{2}}} - \frac{1}{q^{i - \frac{n+3}{2}}}$.
Therefore, $q\mu_1 - \frac{C}{q^{i - \frac{n+5}{2}}} - \frac{1}{q^{i - \frac{n+3}{2}}} = q \mu_{i+1}$.
Finally, $\mu_{i+1} = \mu_1 - \frac{C}{q^{i - \frac{n+3}{2}}} - \frac{1}{q^{i - \frac{n+1}{2}}}$.\\
Note, we have $\mu_n = \mu_1 - \frac{C}{q^\frac{n-5}{2}} - \frac{1}{q^\frac{n-3}{2}}$.\\
4. Equation $(5)$ and $(6)$ imply $\mu_n = p \mu_1$.\\
Therefore, $\mu_1 = \frac{C}{q^\frac{n-3}{2}} + \frac{1}{q^\frac{n-1}{2}} = \frac{qC+1}{q^{\frac{n-1}{2}}} 
= \frac{1}{p^{\frac{n+1}{2}}q^{\frac{n-1}{2}}}$.
\end{proof}
Slightly modifying the argument of the previous theorem we can obtain a similar result,
when $n$ is even.
\begin{theorem}
Let $n$ be a positive even integer. The expected number of letters, that are drawn 
from $\mathcal{B}(p,q)$, until $\mathrsfs{U}_n$ is synchronized,
is equal to $\frac{1}{p^{\frac{n}{2}}q^{\frac{n}{2}}}$.
\end{theorem}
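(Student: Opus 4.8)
The plan is to mimic the odd case almost verbatim, adapting the automaton $\mathrsfs{U}_n$ for even $n$ and the corresponding system of linear equations. First I would set up $\mathrsfs{U}_n$ for even $n$: the state set is $\{1,2,\ldots,n+1\}$, the ``ascending $a$-chain'' now has length $\frac{n}{2}$ instead of $\frac{n+1}{2}$, and the ``ascending $b$-chain'' has length $\frac{n}{2}$ as well, reflecting the target language $\Sigma^* a^{n/2} b^{n/2}\Sigma^*$. Concretely, $\delta(i,a)=i+1$ for $i<\frac{n+2}{2}$, with the self-loop on $a$ at state $\frac{n+2}{2}$, $\delta(i,a)=2$ for $\frac{n+2}{2}<i<n+1$, the sink at $n+1$; and $\delta(i,b)=1$ for $i<\frac{n+2}{2}$, $\delta(i,b)=i+1$ for $\frac{n+2}{2}\le i<n+1$, sink at $n+1$. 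As before, $w$ synchronizes $\mathrsfs{U}_n$ iff $\delta(1,w)=n+1$, so the expected synchronization time equals the mean absorption time $\mu_1$ of the random walk from state $1$ to the sink $n+1$, where $a$ has probability $p$ and $b$ has probability $q$.

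Next I would write down the analogue of the six-case system. The equations take the same shape with the indices shifted by one relative to the odd case: $\mu_1 = p\mu_2 + q\mu_1 + 1$; $\mu_i = p\mu_{i+1} + q\mu_1 + 1$ on the $a$-chain up to $i=\frac{n}{2}$; the self-loop equation $\mu_{\frac{n+2}{2}} = p\mu_{\frac{n+2}{2}} + q\mu_{\frac{n+4}{2}} + 1$; then $\mu_i = p\mu_2 + q\mu_{i+1} + 1$ on the $b$-chain for $\frac{n+4}{2}\le i\le n-1$; $\mu_n = p\mu_2 + q\mu_{n+1} + 1$; and $\mu_{n+1}=0$. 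The solution proceeds in the same four steps as the proof of Theorem~\ref{th:Uodd}: an induction up the $a$-chain shows $\mu_i = \mu_1 - \frac{p^{i-1}-1}{p^i - p^{i-1}}$, giving $\mu_{\frac{n+2}{2}} = \mu_1 - C'$ with $C' = \frac{p^{n/2}-1}{p^{(n+2)/2}-p^{n/2}}$; the self-loop equation yields $\mu_{\frac{n+4}{2}} = \mu_{\frac{n+2}{2}} - \frac1q$; a second induction down the $b$-chain (using $\mu_2 = \mu_1 - \frac1p$ to simplify $\mu_i = p\mu_1 + q\mu_{i+1}$) gives an expression for $\mu_n$ in terms of $\mu_1$, $C'$, $p$, $q$; and finally equations $(5)$–$(6)$ force $\mu_n = p\mu_1$, which I solve for $\mu_1$.

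I expect the only real care needed is bookkeeping of the exponents of $q$ in the second induction: in the even case the gap between $n$ and the branch point $\frac{n+4}{2}$ differs by one from the odd case, so the final closed form for $\mu_n$ will read something like $\mu_n = \mu_1 - \frac{C'}{q^{(n-4)/2}} - \frac{1}{q^{(n-2)/2}}$. Setting this equal to $p\mu_1$ and collecting terms gives $\mu_1 = \frac{qC'+1}{q^{n/2}}$, and the identity $qC'+1 = \frac{1}{p^{n/2}}$ (which is exactly the even-case analogue of the simplification $qC+1 = p^{-(n+1)/2}$ used at the end of Theorem~\ref{th:Uodd}) yields $\mu_1 = \frac{1}{p^{n/2}q^{n/2}}$, as claimed. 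The main obstacle, such as it is, is purely notational: keeping the half-integer indices and the off-by-one shifts consistent throughout so that the telescoping in both inductions goes through cleanly; there is no new idea beyond what the odd case already supplies.
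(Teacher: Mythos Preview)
Your proposal is correct and follows exactly the approach the paper intends: the paper itself gives no separate argument for the even case, merely remarking that ``slightly modifying the argument of the previous theorem'' yields the result, and your index shifts ($\frac{n+3}{2}\mapsto\frac{n+2}{2}$, $C\mapsto C'=\frac{p^{n/2}-1}{p^{(n+2)/2}-p^{n/2}}$, and the corresponding exponents of $q$ in the $b$-chain induction) are precisely that modification. The final simplification $qC'+1=1/p^{n/2}$ is immediate from $q/(p-1)=-1$, so there is nothing missing.
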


\section{The \v{C}ern\'{y} automata $\mathrsfs{C}_n$}
\label{sec:cerny}
Now we study a classical example introduced by \v{C}ern\'{y} in 1964~\cite{Ce64}.
Recall the definition of the \v{C}ern\'{y} automaton $\mathrsfs{C}_n$, see fig.~\ref{fig:cerny}. 
The state set of $\mathrsfs{C}_n$ is $Q=\{0,1,\dots,n-1\}$,
and the letters $a$ and $b$ act on $Q$ as follows.
\[\delta(i, a) =\begin{cases}
1 &\text{if } i = 0,\\
i &\text{if } i > 0;
\end{cases}\quad
\delta(i, b) =\begin{cases}
i+1 &\text{if } i<n-1,\\
0 &\text{if } i=n-1.
\end{cases}\]
\begin{figure}[ht]
\begin{center}
\unitlength .5mm
\begin{picture}(50,80)(-20,-80)
\gasset{Nw=14,Nh=14,Nmr=7,loopdiam=10}
\node(n0)(0,-5){0}
\node(n1)(-30,-20){6} \node(n2)(30,-20){1}
\node(n3)(-35,-50){5} \node(n4)(35,-50){2}
\node(n5)(-17,-75){4} \node(n6)(17,-75){3}
\drawedge[ELdist=2.0](n1,n0){$b$}
\drawedge[ELdist=1.5](n2,n4){$b$}
\drawedge[ELdist=1.7](n0,n2){$a, b$}
\drawedge[ELdist=1.7](n3,n1){$b$}
\drawedge[ELdist=1.7](n4,n6){$b$}
\drawedge[ELdist=1.7](n6,n5){$b$}
\drawedge[ELdist=1.7](n5,n3){$b$}
\drawloop[ELdist=1.5,loopangle=25](n2){$a$}
\drawloop[ELdist=1.5,loopangle=150](n1){$a$}
\drawloop[ELdist=2.4,loopangle=340](n4){$a$}
\drawloop[ELdist=1.5,loopangle=200](n3){$a$}
\drawloop[ELdist=1.5,loopangle=300](n6){$a$}
\drawloop[ELdist=1.5,loopangle=240](n5){$a$}
\end{picture}
\end{center}
\caption{The automaton $\mathrsfs{C}_7$}
\label{fig:cerny}
\end{figure}
The reset threshold of $\mathrsfs{C}_n$ is equal to $(n-1)^2$, see~\cite{AGV13,Gu,Ce64}.

The goal of the present section is to find the expected number of letters,
that are drawn from $\mathcal{B}(p,q)$, until the pair of states $\{1, \frac{n+1}{2}\}$, when $n$ is odd,
and the pair
$\{1, \frac{n+2}{2}\}$, when $n$ is even, is synchronized.
At the same time, we will see that the expectation for these pairs is the largest among other pairs.

Let $\mathrsfs{A} = \langle Q,\Sigma,\delta\rangle$ be an automaton.
The \emph{pair automaton} $\mathcal{P}(\mathrsfs{A})$ 
is defined as follows. The set of states of $\mathcal{P}(\mathrsfs{A})$
is equal to $\{\{s,t\} \, | \, s \neq t\} \cup \{\mathbf{z}\}$.
The transition function $\delta_{\mathcal{P}}$ of $\mathcal{P}(\mathrsfs{A})$ 
for each $x \in \Sigma$, $s,t \in Q$ is defined by the following rules:
\[ \delta_\mathcal{P}(\{s,t\}, x) =
\begin{cases}
\{\delta(s,x),\delta(t,x)\}, \mbox{if } \delta(s,x) \neq \delta(t,x)\\
\mathbf{z}, \mbox{if } \delta(s,x) = \delta(t,x);
\end{cases}
\delta_\mathcal{P}(\mathbf{z}, x) = \mathbf{z}.
\]
Note, that all words $w$ that synchronize a pair $\{s,t\}$ label a path in $\mathcal{P}(\mathrsfs{A})$ 
from $\{s,t\}$ to $\mathbf{z}$. Furthermore, a word $w$ is synchronizing for $\mathrsfs{A}$ if and only if 
$w$ is synchronizing for $\mathcal{P}(\mathrsfs{A})$. The proof of this easy fact can be found for instance in~\cite{Vo08}.

First, let $n$ be a positive odd integer.
In order to prove the main result of this section we will require another representation of the 
pair automaton of $\mathrsfs{C}_n$. We will denote it by $\mathrsfs{P}_n$, see fig.~\ref{fig:C11_pair}.
The state set of $\mathrsfs{P}_n$ is the set of ordered pairs
\[\{(i, \ell) \, | \, 0 \leq i \leq n - 1, \, 1 \leq \ell \leq \frac{n-1}{2}\} \cup \{\mathbf{z}\}.\]
The transition function $\delta$ is defined as follows.\\
$\delta(\mathbf{z}, x) = \mathbf{z}$ for every $x \in \Sigma$,\\
$\delta((i, \ell), b) = ((i+1) \mod n, \ell)$ for every admissible $i$ and $\ell$.\\
$\delta((i,\ell), a) = (i,\ell)$ for every admissible $i$ and $\ell$ with the exception 
of the following cases:\\
$\delta((0,1), a) = \mathbf{z}$,\\
$\delta((0,\ell), a) = (1,\ell - 1)$ if $2 \leq \ell \leq \frac{n-1}{2}$,\\
$\delta((n - \ell,\ell), a) = (n - \ell,\ell + 1)$ if $1 \leq \ell \leq \frac{n-3}{2}$,\\
$\delta((\frac{n+1}{2}, \frac{n-1}{2}), a) = (1, \frac{n-1}{2})$.

\begin{lemma}
Let $n$ be a positive odd integer. 
The automaton $\mathrsfs{P}_n$ is isomorphic to the pair automaton of $\mathrsfs{C}_n$.
\end{lemma}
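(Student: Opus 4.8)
The plan is to exhibit an explicit bijection $\varphi$ between the state set of $\mathrsfs{P}_n$ and the state set of the pair automaton $\mathcal{P}(\mathrsfs{C}_n)$, and then verify that it commutes with the action of both letters $a$ and $b$. Since both automata have a distinguished sink, I would send $\mathbf{z} \mapsto \mathbf{z}$ and concentrate on the non-sink states. A pair $\{s,t\}$ with $s\neq t$ in $\mathrsfs{C}_n$ can be recorded by its smaller element together with the cyclic ``gap'' between the two elements measured along the $b$-cycle $0\to 1\to\cdots\to n-1\to 0$; because $\{s,t\}$ is unordered, this gap can always be chosen in the range $1\le \ell \le \frac{n-1}{2}$ (here oddness of $n$ guarantees the gap is never exactly $n/2$, so the representative is unique). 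Concretely I would define $\varphi(i,\ell)$ to be the pair $\{i, (i+\ell)\bmod n\}$, possibly after a convenient normalization of which coordinate plays the role of $i$; the count $n\cdot\frac{n-1}{2}$ of non-sink states of $\mathrsfs{P}_n$ matches $\binom{n}{2}$, so $\varphi$ is a bijection as soon as it is well-defined and injective.

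Next I would check compatibility with $b$. In $\mathrsfs{C}_n$ the letter $b$ is the rotation $i\mapsto (i+1)\bmod n$, so it sends $\{s,t\}$ to $\{s+1,t+1\}$, preserving the gap $\ell$ and advancing the base point by $1$ — exactly the rule $\delta((i,\ell),b)=((i+1)\bmod n,\ell)$ in $\mathrsfs{P}_n$. The one point needing a remark is that rotating may change which of the two elements is the ``smaller'' one used as the representative, but since the gap stays in $[1,\frac{n-1}{2}]$ the normalization is stable under rotation, so no case analysis beyond a one-line observation is required.

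The genuinely delicate part is compatibility with $a$, because $a$ in $\mathrsfs{C}_n$ fixes every state except $0$, which it sends to $1$. Under $\varphi$ this means: if neither element of the pair is $0$, then $a$ acts trivially; if one element is $0$, then $a$ either collapses the pair (when the other element is $1$, giving the sink), or merges $0$ into $1$, which changes the gap and the base point. I would organize this into the three exceptional clauses of $\mathrsfs{P}_n$: the state $(0,1)$ corresponds to $\{0,1\}$ and must go to $\mathbf{z}$; a state of the form $(0,\ell)$ with $\ell\ge 2$ corresponds to $\{0,\ell\}$ and goes to $\{1,\ell\}$, whose representative has base point $1$ and gap $\ell-1$, matching $\delta((0,\ell),a)=(1,\ell-1)$; and a state of the form $(n-\ell,\ell)$ corresponds to a pair containing $0$ as the \emph{larger} endpoint, i.e.\ $\{n-\ell,0\}$, which under $a$ becomes $\{n-\ell,1\}$ — now the gap, remeasured to stay in range, is $\ell+1$ with the same base point $n-\ell$, matching $\delta((n-\ell,\ell),a)=(n-\ell,\ell+1)$. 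The last exceptional clause $\delta((\tfrac{n+1}{2},\tfrac{n-1}{2}),a)=(1,\tfrac{n-1}{2})$ is the ``wrap-around'' boundary case where both descriptions of the maximal-gap pair coincide, and I would check it by hand. The main obstacle is purely bookkeeping: keeping straight, for each pair containing $0$, whether $0$ is being viewed as the base point or as base-point-plus-gap, and confirming the normalization in each of the finitely many cases; once the dictionary $\varphi(i,\ell)=\{i,(i+\ell)\bmod n\}$ is fixed, each clause of $\mathrsfs{P}_n$'s definition becomes a direct translation of the action of $a$ on $\mathrsfs{C}_n$, and the lemma follows.
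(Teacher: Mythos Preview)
Your proposal is correct and follows essentially the same approach as the paper: both encode an unordered pair $\{s,t\}$ of $\mathrsfs{C}_n$ by a base point $i$ and the cyclic gap $\ell\in[1,\tfrac{n-1}{2}]$ along the $b$-cycle, with oddness of $n$ ensuring uniqueness. The paper defines the inverse map (from pairs to $(i,\ell)$) and simply asserts that the transition check is easy, whereas you write the map in the forward direction $\varphi(i,\ell)=\{i,(i+\ell)\bmod n\}$ and actually carry out the case analysis for the letter $a$; apart from a harmless slip (the base point $i$ is not the numerically smaller element but the one $\ell$ steps behind the other, as your formal definition correctly reflects), your verification is sound.
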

\begin{proof}
We will construct the desired isomorphism.
The sink state $\mathbf{z}$ of the pair automaton is mapped to the sink state $\mathbf{z}$ of $\mathrsfs{P}_n$.
Let $\{s,t\}$ be an arbitrary pair of states. Let $\delta_{\mathrsfs{C}}$ be the transition
function of the automaton $\mathrsfs{C}_n$. There is a positive integer $m$ that satisfies equations
$\delta_{\mathrsfs{C}}(s, b^{m}) = t$ and $\delta_{\mathrsfs{C}}(t, b^{n - m}) = s$. Let $\ell$
be the minimum of $m$ and $n-m$. Since $n$ is odd $m \neq n-m$.
Let 
$$i = 
\begin{cases} 
s, & \mbox{if } \delta_\mathrsfs{C}(s, b^{\ell}) = t\\
t, & \mbox{if } \delta_\mathrsfs{C}(t, b^{\ell}) = s.
\end{cases}$$
Then the pair $\{s,t\}$ of the pair automaton is mapped to the state
$(i, \ell)$ of the automaton $\mathrsfs{P}_n$. It is easy to check that the presented mapping is an isomorphism.
\end{proof}

\begin{figure}[ht]
 \begin{center}
  \unitlength=3pt
    \begin{picture}(80,105)(-15,0)
    \gasset{Nw=7,Nh=7,Nmr=3.5}
\node(410)(0,100){10,5}
\node(49)(0,90){9,5}
\node(48)(0,80){8,5}
\node(47)(0,70){7,5}
\node(46)(0,60){6,5}
\node(45)(0,50){5,5}
\node(44)(0,40){4,5}
\node(43)(0,30){3,5}
\node(42)(0,20){2,5}
\node(41)(0,10){1,5}
\node(40)(0,0){0,5}
\node(310)(10,100){10,4}
\node(39)(10,90){9,4}
\node(38)(10,80){8,4}
\node(37)(10,70){7,4}
\node(36)(10,60){6,4}
\node(35)(10,50){5,4}
\node(34)(10,40){4,4}
\node(33)(10,30){3,4}
\node(32)(10,20){2,4}
\node(31)(10,10){1,4}
\node(30)(10,0){0,4}
\node(210)(20,100){10,3}
\node(29)(20,90){9,3}
\node(28)(20,80){8,3}
\node(27)(20,70){7,3}
\node(26)(20,60){6,3}
\node(25)(20,50){5,3}
\node(24)(20,40){4,3}
\node(23)(20,30){3,3}
\node(22)(20,20){2,3}
\node(21)(20,10){1,3}
\node(20)(20,0){0,3}
\node(110)(30,100){10,2}
\node(19)(30,90){9,2}
\node(18)(30,80){8,2}
\node(17)(30,70){7,2}
\node(16)(30,60){6,2}
\node(15)(30,50){5,2}
\node(14)(30,40){4,2}
\node(13)(30,30){3,2}
\node(12)(30,20){2,2}
\node(11)(30,10){1,2}
\node(10)(30,0){0,2}
\node(010)(40,100){10,1}
\node(09)(40,90){9,1}
\node(08)(40,80){8,1}
\node(07)(40,70){7,1}
\node(06)(40,60){6,1}
\node(05)(40,50){5,1}
\node(04)(40,40){4,1}
\node(03)(40,30){3,1}
\node(02)(40,20){2,1}
\node(01)(40,10){1,1}
\node(00)(40,0){0,1}
\node(0)(50,0){$\mathbf{z}$}
\node[Nframe=n](0f)(40,112){}
\node[Nframe=n](1f)(30,112){}
\node[Nframe=n](2f)(20,112){}
\node[Nframe=n](3f)(10,112){}
\node[Nframe=n](4f)(0,112){}
\drawedge[dash={1}0](010,0f){$b$}
\drawedge[dash={1}0](110,1f){$b$}
\drawedge[dash={1}0](210,2f){$b$}
\drawedge[dash={1}0](310,3f){$b$}
\drawedge[dash={1}0](410,4f){$b$}
\drawedge(10,01){$a$}
\drawedge(20,11){$a$}
\drawedge(30,21){$a$}
\drawedge(40,31){$a$}
\drawedge(00,0){$a$}
\drawedge(010,110){$a$}
\drawedge(19,29){$a$}
\drawedge(28,38){$a$}
\drawedge(37,47){$a$}
\drawedge[curvedepth=-12,ELside=r](46,41){$a$}
\drawedge(00,01){$b$}
\drawedge(01,02){$b$}
\drawedge(02,03){$b$}
\drawedge(03,04){$b$}
\drawedge(04,05){$b$}
\drawedge(05,06){$b$}
\drawedge(06,07){$b$}
\drawedge(07,08){$b$}
\drawedge(08,09){$b$}
\drawedge(09,010){$b$}
\drawedge(10,11){$b$}
\drawedge(11,12){$b$}
\drawedge(12,13){$b$}
\drawedge(13,14){$b$}
\drawedge(14,15){$b$}
\drawedge(15,16){$b$}
\drawedge(16,17){$b$}
\drawedge(17,18){$b$}
\drawedge(18,19){$b$}
\drawedge(19,110){$b$}
\drawedge(20,21){$b$}
\drawedge(21,22){$b$}
\drawedge(22,23){$b$}
\drawedge(23,24){$b$}
\drawedge(24,25){$b$}
\drawedge(25,26){$b$}
\drawedge(26,27){$b$}
\drawedge(27,28){$b$}
\drawedge(28,29){$b$}
\drawedge(29,210){$b$}
\drawedge(30,31){$b$}
\drawedge(31,32){$b$}
\drawedge(32,33){$b$}
\drawedge(33,34){$b$}
\drawedge(34,35){$b$}
\drawedge(35,36){$b$}
\drawedge(36,37){$b$}
\drawedge(37,38){$b$}
\drawedge(38,39){$b$}
\drawedge(39,310){$b$}
\drawedge(40,41){$b$}
\drawedge(41,42){$b$}
\drawedge(42,43){$b$}
\drawedge(43,44){$b$}
\drawedge(44,45){$b$}
\drawedge(45,46){$b$}
\drawedge(46,47){$b$}
\drawedge(47,48){$b$}
\drawedge(48,49){$b$}
\drawedge(49,410){$b$}

    \end{picture}
\end{center}
\caption{Pair automaton of $\mathrsfs{C}_{11}$}
\label{fig:C11_pair}
\end{figure}

Now we are ready to formulate the main result of this section.

\begin{theorem}
Let $n$ be a positive odd integer. The expected number of letters, that are drawn 
from $\mathcal{B}(p,q)$, until the pair $\{1, \frac{n+1}{2}\}$ of $\mathrsfs{C}_n$ is synchronized,
is equal to $\frac{(n-1)((n-1)^2 + q(3n-5) + 4q^2)}{8pq^2}$.
\end{theorem}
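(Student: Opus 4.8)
The plan is to compute the mean absorption time in the Markov chain on the pair automaton $\mathrsfs{P}_n$, starting from the state corresponding to $\{1,\frac{n+1}{2}\}$. By the Lemma, $\mathrsfs{P}_n$ is isomorphic to the pair automaton of $\mathrsfs{C}_n$, so I first identify which state of $\mathrsfs{P}_n$ the pair $\{1,\frac{n+1}{2}\}$ maps to: since $\delta_{\mathrsfs{C}}(1,b^{\frac{n-1}{2}})=\frac{n+1}{2}$ and $\ell=\min(\frac{n-1}{2},\frac{n+1}{2})=\frac{n-1}{2}$, the starting state is $(1,\frac{n-1}{2})$, which sits on the ``longest'' column $\ell=\frac{n-1}{2}$. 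I expect this to be the state with the largest absorption time, which is why the theorem singles it out.

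Next I set up the first-step equations. For each state $s=(i,\ell)$ let $\mu_{(i,\ell)}$ be the expected number of letters until $\mathbf{z}$ is reached, with $\mu_{\mathbf{z}}=0$. Each equation reads $\mu_s = p\,\mu_{\delta(s,a)} + q\,\mu_{\delta(s,b)} + 1$. The key structural observation is that within a fixed level $\ell$, the letter $b$ cyclically permutes the $n$ states $(0,\ell),\dots,(n-1,\ell)$, while $a$ fixes every state in that level except the single ``exit'' state — $(n-\ell,\ell)$ for $\ell<\frac{n-1}{2}$ (which goes up to level $\ell+1$), $(\frac{n+1}{2},\frac{n-1}{2})$ for the top level (which jumps to $(1,\frac{n-1}{2})$ in the same level), and $(0,\ell)$ which goes to $(1,\ell-1)$ for $\ell\ge 2$, with $(0,1)$ the unique absorbing transition. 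So the walk, within level $\ell$, is a biased random walk that on $a$ stays put unless it is at an exit point. I would exploit this by the standard trick used in the proof of Theorem~\ref{th:Uodd}: on the states where $\delta(\cdot,a)$ is the identity, $\mu_s = p\mu_s + q\mu_{\delta(s,b)}+1$, i.e. $\mu_s = \mu_{\delta(s,b)} + \frac1q$, which lets me propagate $\mu$ backwards along each $b$-cycle in closed form, reducing the whole system to a few ``interface'' unknowns (one per level, namely the value at the state just after the exit).

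Carrying this out level by level, I would start from level $1$: going backwards from the absorbing transition at $(0,1)$ around the $b$-cycle gives $\mu_{(i,1)}$ as $\mu_{(1,1)}$ minus a geometric-type correction, with the value at $(n-1,1)$ then fixed by the equation at $(0,1)$, and the exit equation at $(n-1,1)$ (state $(n-\ell,\ell)$ with $\ell=1$) linking level $1$ to level $2$ via $\mu_{(n-1,2)}$. Iterating this up through levels $2,\dots,\frac{n-3}{2}$ — each level contributing an arithmetic-in-$\ell$ offset because the $b$-cycle has length $n$ and the exit moves up by one — I accumulate the $\mu$-values along a telescoping recursion; the top level $\frac{n-1}{2}$ is slightly special because its $a$-exit loops back inside the level rather than going up, giving one extra linear equation that closes the system. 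Finally I evaluate $\mu_{(1,\frac{n-1}{2})}$ and simplify; the sum over the $\approx\frac{n-1}{2}$ levels of terms linear in $\ell$ produces the $(n-1)^2$ factor, the $b$-steps within cycles produce the $1/q$ and $1/(pq^2)$ denominators, and collecting everything should give exactly $\frac{(n-1)((n-1)^2+q(3n-5)+4q^2)}{8pq^2}$.

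The main obstacle will be bookkeeping: tracking the correct closed-form for $\mu_{(i,\ell)}$ across all levels simultaneously, getting the boundary cases right (the unique absorbing edge at $(0,1)$, the self-loop-to-same-level exit at the top level, and the ``$(0,\ell)\to(1,\ell-1)$'' edges which mean levels are not fully independent), and then performing the final summation over $\ell$ without arithmetic slips. Once the per-level recursion is pinned down, verifying the answer is routine algebra; I would also sanity-check by confirming the monotonicity claim — that $(1,\frac{n-1}{2})$ indeed maximizes $\mu$ — which follows because each step up a level only adds positive offsets, so states on higher levels and farther (in $b$-distance) from their exits have strictly larger absorption times.
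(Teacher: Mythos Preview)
Your proposal is correct and follows essentially the same approach as the paper: reduce the absorption-time system on $\mathrsfs{P}_n$ level by level via the identity $\mu_s=\mu_{\delta(s,b)}+\tfrac1q$ at self-loop states, obtain a recurrence among the values $\mu_{1,\ell}$ (the paper gets the second-order relation $2\mu_{1,\ell}=\mu_{1,\ell+1}+\mu_{1,\ell-1}+\tfrac{n-p}{pq^2}$ together with boundary equations at $\ell=1$ and $\ell=\tfrac{n-1}{2}$), and solve it in closed form. One small slip: the per-step correction along a $b$-cycle here is \emph{arithmetic} ($\tfrac1q$ each step), not geometric as in Theorem~\ref{th:Uodd}, so your ``geometric-type correction'' should read linear in the $b$-distance.
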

\begin{proof}
It is not hard to see that a word $w$ labels a path from $(i,\ell)$ to $\mathbf{z}$ in the automaton $\mathrsfs{P}_n$
if and only if the word $w$ synchronizes the pair $\{i, (i + \ell) \mod n\}$ of the automaton $\mathrsfs{C}_n$.
Thus, the expected number of letters until the pair $\{1, \frac{n+1}{2}\}$ is 
synchronized is equal to the expected length of a random walk in automaton $\mathrsfs{P}_n$
from the state $(1,\frac{n-1}{2})$ to the state $\mathbf{z}$, where the probability of the transition labeled
by $a$ is $p$, and the probability of the transition labeled by $b$ is $q$.
For $0\leq i \leq n -1$ and $1 \leq \ell \leq \frac{n-1}{2}$ 
let $\mu_{i, \ell}$ be the expected length of a random walk that 
brings the state $(i,\ell)$ of $\mathrsfs{P}_n$ to the state $\mathbf{z}$.
As in the proof of the theorem~\ref{th:Uodd} these values have to satisfy
a particular system of linear equations, see~\cite[section 6.2]{Pr13}. 
For convenience, we will split this system into three parts.
The first part:
\[
\left\{
\begin{array}{l l r}
\mu_{0,1} = q \mu_{1,1} + 1 & & (1)\\
\mu_{i,1} = p \mu_{i,1} + q \mu_{i+1, 1} + 1,& \mbox{if } 1 \leq i \leq n-2 & \qquad (2)\\
\mu_{n-1,1} = p \mu_{n-1,2} + q \mu_{0,1} + 1 & & (3)\\
\mu_{\mathbf{z}} = 0 & &
\end{array}
\right.\]
The second part, $2 \leq \ell \leq \frac{n-3}{2}$:
\[
\left\{
\begin{array}{l l r}
\mu_{0,\ell} = p\mu_{1,\ell - 1} + q\mu_{1,\ell} + 1 & & (4)\\
\mu_{i,\ell} = p \mu_{i,\ell} + q \mu_{i+1, \ell} + 1,& \mbox{  if } 1 \leq i \leq n - \ell -1 & \qquad (5)\\
\mu_{n - \ell,\ell} = p \mu_{n - \ell,\ell + 1} + q \mu_{n-\ell +1, \ell} + 1,&  & \qquad (6)\\
\mu_{i,\ell} = p \mu_{i,\ell} + q \mu_{i+1, \ell} + 1,& \mbox{  if } n - \ell + 1 \leq i \leq n-2 & \qquad (7)\\
\mu_{n-1,\ell} = p \mu_{n-1,\ell} + q \mu_{0, \ell} + 1,& & \qquad (8)\\
\end{array}
\right.\]
And the third part:
\[
\left\{
\begin{array}{l l r}
\mu_{0,\frac{n-1}{2}} = p\mu_{1,\frac{n-3}{2}} + q\mu_{1,\frac{n-1}{2}} + 1 & & (9)\\
\mu_{i,\frac{n-1}{2}} = p \mu_{i,\frac{n-1}{2}} + q \mu_{i+1, \frac{n-1}{2}} + 1,& \mbox{  if } 1 \leq i \leq \frac{n-1}{2} & \qquad (10)\\
\mu_{\frac{n+1}{2},\frac{n-1}{2}} = p \mu_{1,\frac{n-1}{2}} + q \mu_{\frac{n+3}{2}, \frac{n-1}{2}} + 1,&  & \qquad (11)\\
\mu_{i,\frac{n-1}{2}} = p \mu_{i,\frac{n-1}{2}} + q \mu_{i+1, \frac{n-1}{2}} + 1,& \mbox{  if } \frac{n+3}{2} \leq i \leq n-2 & \qquad (12)\\
\mu_{n-1,\frac{n-1}{2}} = p \mu_{n-1,\frac{n-1}{2}} + q \mu_{0, \frac{n-1}{2}} + 1,& & \qquad (13)\\
\end{array}
\right.\]

Let us resolve the first part.
Applying equations $(2)$ in successive order we get
$\mu_{1,1} \overset{(2)}{=} \mu_{n-1,1} + \frac{n-2}{q} \overset{(3)}{=} p \mu_{n-1,2} + q \mu_{0,1} + 1 + \frac{n-2}{q}$.
Since $\mu_{n-1,2}\overset{(8)}{=} \mu_{0,2} + \frac{1}{q} \overset{(4)}{=} p\mu_{1,1} + q\mu_{1,2} + 1 + \frac{1}{q}$
and $\mu_{0,1} \overset{(1)}{=} q\mu_{1,1} + 1$
we have
$\mu_{1,1}= p (p\mu_{1,1} + q\mu_{1,2} + 1 + \frac{1}{q}) + q (q\mu_{1,1} + 1) + 1 + \frac{n-2}{q}$.
After trivial simplification, using the fact that $1 - p^2 - q^2 = 2pq$, we obtain 
\[2\mu_{1,1}=\mu_{1,2}+\frac{n-p}{pq^2} \qquad\qquad (14)\]

Let us focus on the second part.
Let $2 \leq \ell \leq \frac{n-3}{2}$.
Applying equations $(5)$ several times in successive order we get
$\mu_{1, \ell} \overset{(5)}{=} \mu_{n-\ell,\ell} + \frac{n-\ell-1}{q} \overset{(6)}{=} 
p \mu_{n-\ell,\ell+1} + q \mu_{n-\ell+1,\ell} + 1 + \frac{n-\ell-1}{q}$.
Since $\mu_{n-\ell,\ell+1} \overset{(7\,\mbox{\footnotesize{or}}\,12)}{=} \mu_{n-1,\ell+1} + \frac{\ell - 1}{q} 
\overset{(8\,\mbox{\footnotesize{or}}\,13)}{=} \mu_{0,\ell+1} + \frac{\ell}{q}
\overset{(4 \,\mbox{\footnotesize{or}}\,9)}{=} p\mu_{1,\ell} + q\mu_{1,\ell + 1} + 1 + \frac{\ell}{q}$ 
and $\mu_{n - \ell + 1, \ell} \overset{(7)}{=} \mu_{n - 1, \ell} + \frac{\ell - 2}{q}
\overset{(8)}{=} \mu_{0,\ell} + \frac{\ell - 1}{q}
\overset{(4)}{=} p\mu_{1,\ell-1} + q\mu_{1,\ell} + 1 + \frac{\ell - 1}{q}$ we have
$\mu_{1,\ell} = p(p\mu_{1,\ell} + q\mu_{1,\ell + 1} + 1 + \frac{\ell}{q}) 
+ q(p\mu_{1,\ell-1} + q\mu_{1,\ell} + 1 + \frac{\ell - 1}{q}) + 1 + \frac{n-\ell-1}{q}$.
After simplification we obtain the following equation:
\[2\mu_{1,\ell}=\mu_{1,\ell+1}+\mu_{1,\ell-1}+\frac{n-p}{pq^2} \qquad\qquad (15)\]

Let us resolve the third part.
Applying equations $(10)$ in successive order we get
$\mu_{1,\frac{n-1}{2}} \overset{(10)}{=} 
\mu_{\frac{n+1}{2},\frac{n-1}{2}} + \frac{n-1}{2q}
\overset{(11)}{=} p\mu_{1,\frac{n-1}{2}} + q\mu_{\frac{n+3}{2},\frac{n-1}{2}} + 1 + \frac{n-1}{2q}$.
Since $\mu_{\frac{n+3}{2},\frac{n-1}{2}} \overset{(12)}{=} 
\mu_{n-1,\frac{n-1}{2}} + \frac{n-5}{2q} \overset{(13)}{=}
\mu_{0,\frac{n-1}{2}} + \frac{n-3}{2q} \overset{(9)}{=}
p\mu_{1,\frac{n-3}{2}} + q\mu_{1,\frac{n-1}{2}} + 1 + \frac{n-3}{2q}$ we have
$\mu_{1,\frac{n-1}{2}} = p\mu_{1,\frac{n-1}{2}} 
+ q(p\mu_{1,\frac{n-3}{2}} + q\mu_{1,\frac{n-1}{2}} + 1 + \frac{n-3}{2q}) + 1 + \frac{n-1}{2q}$.
After an easy simplification we obtain the following equation:
\[\mu_{1,\frac{n-1}{2}} = \mu_{1,\frac{n-3}{2}} +\frac{q^2 + \frac{n-1}{2}q + \frac{n-1}{2}}{pq^2} \qquad (16)\]

Summing up equations $(14)$,$(15)$ for $2 \leq \ell \leq \frac{n-3}{2}$, and $(16)$ we
obtain the following equation:
\[ \mu_{1,1} = \frac{n-3}{2}\cdot\frac{n-p}{pq^2} + \frac{q^2 + \frac{n-1}{2}q + \frac{n-1}{2}}{pq^2} \qquad (17)\]

Now we can show that
\[\mu_{1,\ell} = \ell\mu_{1,1} - \frac{\ell(\ell - 1)}{2}\cdot\frac{n-p}{pq^2} \qquad (18)\]
Equation $(14)$ serves as the induction base. 
Using equation $(15)$ we make the induction step.

From equation $(18)$ for $\ell = \frac{n-1}{2}$ we get
$\mu_{1,\frac{n-1}{2}} = \frac{n-1}{2}\mu_{1,1} - \frac{(n-1)(n-3)}{8}\cdot\frac{n-p}{pq^2}$.
Using $(17)$ after tedious simplification we get the final result:
\[ \mu_{1,\frac{n-1}{2}} = \frac{(n-1)((n-1)^2 + q(3n-5) + 4q^2)}{8pq^2} \qquad (19)\]
\end{proof}

Note, that the leading term of $\mu_{1,\frac{n-1}{2}}$ is equal to $\frac{n^3}{8pq^2}$.
It is easy to see, that the minimum of $\frac{1}{8pq^2}$ is reached at $p=\frac{1}{3}$.
Therefore, the expected
number of letters until the pair $\{1, \frac{n+1}{2}\}$ of $\mathrsfs{C}_n$ is synchronized
is close to the minimum for the source of random letters $\mathcal{B}(\frac{1}{3},\frac{2}{3})$.
In this case we have
\[\mu_{1,\frac{n-1}{2}} = \frac{27n^3}{32} - \frac{27n^2}{32} - \frac{15n}{32} + \frac{15}{32}\]

\begin{theorem}
Let $n$ be a positive even integer. The expected number of letters, that are drawn 
from $\mathcal{B}(p,q)$, until the pair $\{1, \frac{n+2}{2}\}$ of $\mathrsfs{C}_n$ is synchronized,
is equal to $\frac{n( (n-1)(n-2) +q(3n-6) +4q^2)}{8pq^2}$.
\end{theorem}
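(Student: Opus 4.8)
The plan is to follow the odd case essentially verbatim; the only genuinely new feature is a ``degenerate'' top level, caused by the fact that for even $n$ a pair of states at cyclic distance $\frac n2$ satisfies $m=n-m$. First I would introduce the even analogue $\mathrsfs{P}_n$ of the pair automaton: its states are the pairs $(i,\ell)$ with $0\le i\le n-1$ and $1\le\ell\le\frac n2-1$, together with a short top level $\{(i,\frac n2)\mid 0\le i\le\frac n2-1\}$ and the sink $\mathbf{z}$. The letter $b$ shifts $i$ cyclically (modulo $n$ on the full levels, modulo $\frac n2$ on the top level), and $a$ acts as the identity except for $(0,1)\mapsto\mathbf{z}$, $(0,\ell)\mapsto(1,\ell-1)$ for $2\le\ell\le\frac n2$, $(n-\ell,\ell)\mapsto(n-\ell,\ell+1)$ for $1\le\ell\le\frac n2-2$, and $(\frac n2+1,\frac n2-1)\mapsto(1,\frac n2)$. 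One proves that $\mathrsfs{P}_n$ is isomorphic to the pair automaton of $\mathrsfs{C}_n$ exactly as in the preceding lemma, sending a pair $\{s,t\}$ at cyclic distance $\frac n2$ to $(\min\{s,t\},\frac n2)$. As before, the target pair $\{1,\frac{n+2}{2}\}=\{1,1+\frac n2\}$ corresponds to $(1,\frac n2)$, so the desired expectation equals the mean absorption time of the random walk in $\mathrsfs{P}_n$ started at $(1,\frac n2)$, where $a$ has probability $p$ and $b$ has probability $q$.

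Next I would write down the absorption system for the values $\mu_{i,\ell}$ and split it into four blocks: level $1$, the generic middle levels $2\le\ell\le\frac n2-2$, level $\frac n2-1$ (whose ``going up'' $a$-edge now wraps into the short ring), and the top level $\frac n2$. Walking around each ring and using the local equations, one eliminates every $\mu_{i,\ell}$ with $i\neq 1$. The level-$1$ and middle blocks are formally identical to the odd proof and reproduce
\[2\mu_{1,1}=\mu_{1,2}+\frac{n-p}{pq^2},\qquad 2\mu_{1,\ell}=\mu_{1,\ell+1}+\mu_{1,\ell-1}+\frac{n-p}{pq^2}\quad(2\le\ell\le\tfrac n2-2).\]
Walking around the top ring of length $\frac n2$ yields the extra relation $\mu_{1,\frac n2}=\mu_{1,\frac n2-1}+\frac{2q+n-2}{2pq}$, and feeding this into the walk around level $\frac n2-1$ gives the boundary equation $\mu_{1,\frac n2-1}=\mu_{1,\frac n2-2}+\frac{2q^2+nq+2n-2}{2pq^2}$, which plays the role of equation $(16)$.

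From here the endgame is the same as in the odd proof. Summing the level-$1$ equation, the middle equations and the boundary equation telescopes to a closed expression for $\mu_{1,1}$ (exactly $\frac n2-2$ of the summands contribute $\frac{n-p}{pq^2}$); an induction on $\ell$ via the middle recurrence gives $\mu_{1,\ell}=\ell\mu_{1,1}-\frac{\ell(\ell-1)}{2}\cdot\frac{n-p}{pq^2}$ for $1\le\ell\le\frac n2-1$; substituting $\ell=\frac n2-1$ and then applying $\mu_{1,\frac n2}=\mu_{1,\frac n2-1}+\frac{2q+n-2}{2pq}$ gives, after replacing $p$ by $1-q$ and collecting terms, the claimed value $\frac{n((n-1)(n-2)+q(3n-6)+4q^2)}{8pq^2}$.

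The step I expect to be most delicate is the bookkeeping around the short top level. Since it carries $\frac n2$ states and only one non-trivial $a$-edge, the number of $\frac1q$-increments accumulated when traversing it -- and when traversing level $\frac n2-1$, whose up-edge lands inside it -- is shifted by one relative to the odd pattern, and it is exactly these shifts that account for the differences between the two closed forms (e.g.\ the $(n-1)^2$ of the odd answer becoming $(n-1)(n-2)$). One should also handle the small cases $n=2$ and $n=4$ directly, where the level structure degenerates; a short computation confirms the formula remains valid in both.
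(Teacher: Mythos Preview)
Your proposal is correct and follows essentially the same route as the paper's (sketched) proof: reinterpret the pair automaton, set up the absorption system, collapse each level to a relation among the $\mu_{1,\ell}$, telescope to obtain $\mu_{1,1}$, and finish with the explicit formula.

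The only noteworthy difference is organisational. You single out level $\tfrac n2-1$ as a separate boundary case, deriving the relation $\mu_{1,\frac n2-1}=\mu_{1,\frac n2-2}+\frac{2q^2+nq+2n-2}{2pq^2}$ to play the role of~(16). The paper instead keeps the recurrence~(15) valid all the way up to $\ell=\tfrac n2-1$ and replaces~(16) directly by the single relation~(20), namely $\mu_{1,\frac n2}=\mu_{1,\frac n2-2}+\frac{\frac{n-2}{2}+q}{pq}$ (your top-ring identity). Both packagings are equivalent: if in your walk around level $\tfrac n2-1$ you substitute the full expression for $\mu_{1,\frac n2}$ obtained by walking once around the short top ring, the constants combine to give exactly $\frac{n-p}{pq^2}$, so~(15) does hold at $\ell=\tfrac n2-1$. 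Your version simply makes the top-level geometry more explicit; the paper's is a line shorter.
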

\begin{proof}
The proof this theorem is similar to a proof of a previous one and we will omit it.
The main difference lies in the equations $(9)-(13)$.
So, instead of $(16)$ we will get
\[\mu_{1,\frac{n}{2}} = \mu_{1,\frac{n-2}{2}} + \frac{\frac{n-2}{2}+q}{pq} \qquad (20)\]
From $(18)$ and $(20)$ we can obtain the following equation:
\[ \mu_{1,\frac{n}{2}} = \frac{n( (n-1)(n-2) +q(3n-6) +4q^2)}{8pq^2}.\]
\end{proof}

As before, the expected number of letters until the pair $\{1, \frac{n+2}{2}\}$ of $\mathrsfs{C}_n$ is synchronized 
is close to the minimum for the the source $\mathcal{B}(\frac{1}{3},\frac{2}{3})$:
\[\frac{27n^3}{32} - \frac{27n^2}{32} - \frac{6n}{32}\]
\section{Conclusion}
The expected number of steps until synchronization of the automata $\mathrsfs{U}_n$ 
is exponential in the number of states.
At the same time, the expected number of steps to synchronize any pair 
of the \v{C}ern\'{y} automata $\mathrsfs{C}_n$ is at most cubic in the number of states.
These results reveal that despite the fact that synchronization of $\mathrsfs{C}_n$ is hard in the deterministic 
case, it is relatively easy in the random setting.

%
%
%

\end{document}